\newcounter{savenumi}
\newtheorem{theoremfoo}{Theorem}
\newenvironment{theorem}{\pagebreak[1]\begin{theoremfoo}}{\end{theoremfoo}}
\newtheorem{propositionfoo}[theoremfoo]{Proposition}
\newtheorem{lemmafoo}[theoremfoo]{Lemma}
\newenvironment{lemma}{\pagebreak[1]\begin{lemmafoo}}{\end{lemmafoo}}
\newtheorem{conjecturefoo}[theoremfoo]{Conjecture}
\newtheorem{corollaryfoo}[theoremfoo]{Corollary}
\newenvironment{corollary}{\pagebreak[1]\begin{corollaryfoo}}{\end{corollaryfoo}}
\newtheorem{exercisefoo}{Exercise}
\newtheorem{openfoo}[theoremfoo]{Question}
\newtheorem{nttn}[theoremfoo]{Notation}
\newtheorem{dfntn}[theoremfoo]{Definition}
\newenvironment{definition}{\pagebreak[1]\begin{dfntn}\rm}{\end{dfntn}}
\newenvironment{proof}
    {\pagebreak[1]{\narrower\noindent {\bf Proof:\quad\nopagebreak}}}{\QED}
\newcommand{\floor}[1]{\left\lfloor#1\right\rfloor}
\newcommand{\ceiling}[1]{\left\lceil#1\right\rceil}
\def\nre.{$n$\/-r.e.}
\newtheorem{factfoo}[theoremfoo]{Fact}
\newcommand{\squeeze}{
\textwidth 6in
\textheight 8.8in
\oddsidemargin 0.2in
\topmargin -0.4in
}
\newtheorem{propertyfoo}[theoremfoo]{Property}
\def\@makechapterhead#1{ \vspace*{50pt} { \parindent 0pt \raggedright 
 \ifnum \c@secnumdepth >\m@ne \huge\bf \@chapapp{} \thechapter. \par 
 \vskip 20pt \fi \Huge \bf #1\par 
 \nobreak \vskip 40pt } }
\def\@sect#1#2#3#4#5#6[#7]#8{\ifnum #2>\c@secnumdepth
     \def\@svsec{}\else 
     \refstepcounter{#1}\edef\@svsec{\csname the#1\endcsname.\hskip 1em }\fi
     \@tempskipa #5\relax
      \ifdim \@tempskipa>\z@ 
        \begingroup #6\relax
          \@hangfrom{\hskip #3\relax\@svsec}{\interlinepenalty \@M #8\par}
        \endgroup
       \csname #1mark\endcsname{#7}\addcontentsline
         {toc}{#1}{\ifnum #2>\c@secnumdepth \else
                      \protect\numberline{\csname the#1\endcsname}\fi
                    #7}\else
        \def\@svsechd{#6\hskip #3\@svsec #8\csname #1mark\endcsname
                      {#7}\addcontentsline
                           {toc}{#1}{\ifnum #2>\c@secnumdepth \else
                             \protect\numberline{\csname the#1\endcsname}\fi
                       #7}}\fi
     \@xsect{#5}}
\def\@begintheorem#1#2{\it \trivlist \item[\hskip \labelsep{\bf #1\ #2.}]}
\def\@opargbegintheorem#1#2#3{\it \trivlist
      \item[\hskip \labelsep{\bf #1\ #2\ (#3).}]}
\newif\ifshortconferences
\newif\ifmediumconferences
\def\ending#1{{\count1=#1\relax
\count2=\count1
\divide\count2 by 100
\multiply\count2 by 100
\advance\count1 by -\count2
\ifnum\count1=11
th%
\else \ifnum\count1=12
th%
\else \ifnum\count1=13
th%
\else 
\count2=\count1
\divide\count1 by 10
\multiply\count1 by 10
\advance\count2 by -\count1
\ifnum\count2=1
st%
\else \ifnum\count2=2
nd%
\else \ifnum\count2=3
rd%
\else th%
\fi\fi\fi\fi\fi\fi
}}
\def\Proceedingsofthe{\ifshortconferences Proc.\else\ifmediumconferences Proc.\else Proceedings of the\fi\fi}
\newcounter{confnum}
\def\conf#1#2{%
\setcounter{confnum}{#2}%
\addtocounter{confnum}{-\csname #1zero\endcsname}%
\ifnum\value{confnum}=1%
\expandafter\ifx\csname #1One\endcsname\relax%
\Proceedingsofthe\ \arabic{confnum}\ending{\value{confnum}}\ \csname #1name\endcsname%
\else \csname #1One\endcsname\fi%
\else%
\Proceedingsofthe\
\arabic{confnum}\ending{\value{confnum}}\ \csname #1name\endcsname\fi}
\def\qsym{\vrule width0.7ex height0.9em depth0ex}
\newif\ifqed\qedtrue
\def\noqed{\global\qedfalse}
\def\qed{\ifqed{\penalty1000\unskip\nobreak\hfil\penalty50
\hskip2em\hbox{}\nobreak\hfil\qsym
\parfillskip=0pt \finalhyphendemerits=0\par\medskip}\fi\global\qedtrue}
\def\eqnqed{\noqed
	\def\@tempa{equation}
	\ifx\@tempa\@currenvir\def\@eqnnum{\qsym}%
	\addtocounter{equation}{-1}\else%
    \def\@@eqncr{\let\@tempa\relax
    \ifcase\@eqcnt \def\@tempa{& & &}\or \def\@tempa{& &}%
      \else \def\@tempa{&}\fi
     \@tempa {\def\@eqnnum{{\qsym}}\@eqnnum}
     \global\@eqnswtrue\global\@eqcnt\z@\cr}\fi}
\def\eqnlabel#1#2{\if@filesw {\let\thepage\relax%
   \def\protect{\noexpand\noexpand\noexpand}%
   \edef\@tempa{\write\@auxout{\string
      \newlabel{#2}{{{#1}}{\thepage}}}}%
   \expandafter}\@tempa%
   \if@nobreak \ifvmode\nobreak\fi\fi\fi%
	\def\@tempa{equation}
	\ifx\@tempa\@currenvir\def\theequation{{#1}}%
	\addtocounter{equation}{-1}\else%
    \def\@@eqncr{\let\@tempa\relax
    \ifcase\@eqcnt \def\@tempa{& & &}\or \def\@tempa{& &}%
      \else \def\@tempa{&}\fi
     \@tempa {\def\@eqnnum{{#1}}\@eqnnum}
     \global\@eqnswtrue\global\@eqcnt\z@\cr}\fi}
\def\QED{\qed}
\newcommand{\tightcomplexityZ}{
$O({1\over \epsilon}\min(\log n, {\log ({x_{max}\over
x_{min}})})\cdot (\log {1\over \epsilon}+\log\log n))$}
\begin{document}

\date{}

\title{On the Complexity of Approximate Sum of Sorted List~\thanks{This research is supported in part by the NSF Early Career Award CCF 0845376. The first version is on December 2,
2011, and it is revised on 12/18/2011, 1/16/2012, and 1/21/2012.}}

\author{
Bin Fu
\\ \\
Department of Computer Science\\
 University of Texas-Pan American\\
 Edinburg, TX 78539, USA\\
 Email: binfu@cs.panam.edu\\
} \maketitle



\begin{abstract} We consider the complexity for computing
the approximate sum $a_1+a_2+\cdots+a_n$ of a sorted list of numbers
$a_1\le a_2\le \cdots\le a_n$. We show an algorithm that computes an
$(1+\epsilon)$-approximation for the sum of a sorted list of
nonnegative numbers in an
~\tightcomplexityZ~time, where $x_{max}$ and $x_{min}$ are the
largest and the least positive elements of the input list,
respectively.
We  prove a lower bound $\Omega(\min(\log n,\log ({x_{max}\over
x_{min}}))$ time for every $O(1)$-approximation algorithm for the
sum of a sorted list of nonnegative elements.
We also show that there is no sublinear time approximation algorithm
for the sum of a sorted list that contains at least one negative
number.
\end{abstract}

\section{Introduction}

Computing the sum of a list of numbers is a classical problem that
is often found inside the high school textbooks. There is a famous
story about Karl Friedrich Gauss who computed $1+2+\cdots +100$ via
rearranging these terms into $(1+100)+(2+99)+...+(50+51)=50\times
101$, when he was seven years old, attending elementary school. Such
a method is considered an efficient algorithm for computing a class
of lists of increasing numbers. Computing the sum of a list of
elements has many applications, and is ubiquitous in software
design. In the classical mathematics, many functions can be
approximated by the sum of simple functions via Taylor expansion.
This kind of approximation theories is in the core area of
mathematical analysis. In this article we consider if there is an
efficient way to compute the sum of a general list of nonnegative
numbers with nondecreasing order.

 Let
$\epsilon$ be a real number at least $0$. Real number $s$ is an
$(1+\epsilon)$-approximation for the sum problem $a_1,a_2,\cdots,
a_n$ if ${\sum_{i=1}^n a_i\over 1+\epsilon}\le s\le
(1+\epsilon)\sum_{i=1}^na_i$. Approximate sum problem was studied in
the randomized computation model. Every $O(1)$-approximation
algorithm with uniform random sampling requires $\Omega(n)$ time in
the worst case if the list of numbers in $[0,1]$ is not sorted.
Using $O({1\over \epsilon^2}\log {1\over \delta})$ random samples,
one can compute the $(1+\epsilon)$-approximation for the mean, or
decide if it is at most $\delta$ for a list numbers in
$[0,1]$~\cite{Hoefding63}. Canetti, Even, and
Goldreich~\cite{CanettiEvenGoldreich95} showed that the sample size
is tight.  Motwani, Panigrahy, and Xu~\cite{MotwaniPanigrahyXu07}
showed an $O(\sqrt{n})$ time approximation scheme for computing the
sum of $n$ nonnegative elements. There is a long history of research
for the accuracy of summation of floating point numbers
 (for examples, see~\cite{Kahan65,Bresenham65,Anderson99,DemmelHida03,Espelid95,Gregory72,Higham93,Knuth98,Linz70,Malcolm71,Priest92,ZhuYongZheng05}).
The efforts were mainly spent on finding algorithms with small
rounding errors.

We investigate the complexity for computing the approximate sum of a
sorted list. When we have a large number of data items and need to
compute the sum, an efficient approximation algorithm becomes
important. Par-Heled developed an coreset approach for a more
general problem. The method used in his paper implies an $O({\log
n\over \epsilon})$ time approximation algorithm for the approximate
sum of sorted nonnegative numbers~\cite{Har-Peled06}. The coreset is
a subset of numbers selected from a sorted input list, and their
positions only depends on the size $n$ of the list, and independent
of the numbers. The coreset of a list of $n$ sorted nonnegative
numbers has a size $\Omega(\log n)$. This requires the algorithm
time to be also $\Omega(\log n)$ under all cases.

 We show an algorithm that gives an
$(1+\epsilon)$-approximation for the sum of a list of sorted
nonnegative elements in
~\tightcomplexityZ~time,  where $x_{max}$ and $x_{min}$ are the
largest and the least positive elements of the input list,
respectively. This algorithm has a comparable complexity with
Par-Heled's algorithm. Our algorithm is of sub-logarithm complexity
when ${x_{max}\over x_{min}}\le n^{1\over (\log\log n)^{1+a}}$ for
any fixed $a>0$. The algorithm is based on a different method, which
is a quadratic region search algorithm, from the coreset
construction used in~\cite{Har-Peled06}.

We also prove a lower bound $\Omega(\min(\log n,\log ({x_{max}\over
x_{min}}))$ for this problem.
We first derive an $O(\log \log n)$ time
approximation algorithm that finds an approximate region of the list
for holding the items of size at least a threshold $b$. Our
approximate sum algorithm is derived with it as a submodule. We also
show an $\Omega(\log \log n)$ lower bound for approximate region
algorithms for the sum of a sorted list with only nonnegative
elements.


In Section~\ref{alg-sec}, we present an algorithm that computes
$(1+\epsilon)$-approximation for the sum of a sorted list of
nonnegative numbers in
$O({1\over \epsilon}\min(\log n, {\log ({x_{max}\over
x_{min}})})\cdot (\log {1\over \epsilon}+\log\log n))$ time, where
$x_{max}$ and $x_{min}$ are the largest and the least positive
elements of the input list, respectively.  In
Section~\ref{lower-bounds-sec}, we present lower bounds related to
the sum of sorted list.
 In Section~\ref{implementation-sec}, we show the
experimental results for the implementation of our algorithm in
Section~\ref{alg-sec}. This paper contains self-contained proofs for
all its results.



\section{Algorithm for Approximate Sum of Sorted List}\label{alg-sec}

In this section, we show a deterministic algorithm for the sorted
elements. We first show an approximation to find an approximate
region of a sorted list with elements of size at least threshold
$b$.

A crucial part of our approximate algorithm for the sum of sorted
list  is to find an approximate region with elements of size at
least a threshold $b$. We develop a method that is much faster than
binary search and it takes $O(\log {1\over \delta}+\log\log n)$ time
to find the approximate region. We first apply the square function
to expand the region and  use the square root function to narrow
down to a region that only has $(1+\delta)$ factor difference with
the exact region. The parameter $\delta$ determines the accuracy of
approximation.

\begin{definition}\label{interval-size-def}
 For $i\le j$, let $|[i,j]|$ be the number of integers in the interval $[i,j]$.
\end{definition}

If both $i$ and $j$ are integers with $i\le j$, we have
$|[i,j]|=j-i+1$.

\begin{definition} A list $X$ of $n$ numbers is represented by
an array $X[1,n]$, which has $n$ numbers $X[1],X[2],\cdots, X[n]$.
For integers $i\le j$, let $X[i,j]$ be the sublist that contains
elements $X[i],X[i+1],\cdots, X[j]$. For an interval $R=[i,j]$,
denote $X[R]$ to be $X[i,j]$.
\end{definition}

\begin{definition}
For a sorted list  $X[1,n]$ with nonnegative elements by
nondecreasing order and a threshold $b$, the {\it $b$-region} is an
interval $[n',n]$ such that $X[n',n]$ are the numbers at least $b$
in $X[1,n]$. An {\it $(1+\delta)$-approximation for the $b$-region}
is a region $R=[s,n]$, which contains the last position $n$ of
$X[1,n]$, such that at least ${|R|\over 1+\delta}$ numbers in
$X[s,n]$ are at least $b$, and $[s,n]$ contains all every position
$j$ with $X[j]\ge b$, where $|R|$ is the number of integers $i$ in
$R$.
\end{definition}

\subsection{Approximate Region}\label{appx-region-sec}

The approximation algorithm for finding an approximate $b$-region to
contain the elements at least a threshold $b$ has two loops. The
first loop searches the region by increasing  the parameter $m$ via
the square function. When the region is larger than the exact
region, the second loop is entered. It converges to the approximate
region with a factor that goes down by a square root each cycle.
Using the combination of the square and square root functions makes
our algorithm much faster than the binary search.

In order to simplify the description of the algorithm
Approximate-Region(.), we assume $X[i]=-\infty$ for every $i\le 0$.
It can save the space for the boundary checking when accessing the
list $X$.  The description of the algorithm is mainly based on the
consideration for its proof of correctness. For a real number $a$,
denote $\floor{a}$ to be the largest integer at most $a$, and
$\ceiling{a}$ to be the least integer at least $a$. For examples,
$\floor{3.7}=3$, and $\ceiling{3.7}=4$.

 \vskip 10pt

{\bf Algorithm Approximate-Region($X, b, \delta, n$)}

Input: $X[1,n]$ is a sorted list of $n$ numbers by nondecreasing
order; $n$ is the size of $X[1,n]$; $b$ is a threshold in
$(0,+\infty)$; and $\delta$ is a parameter in $(0,+\infty)$.

\begin{enumerate}[1.]
\item\label{first-beg}
\qquad if $(X[n]<b$),  return $\emptyset$;
\item
\qquad if $(X[n-1]<b$), return $[n,n]$;
\item
\qquad if $(X[1]\ge b$), return $[1,n]$;
\item
\qquad let $m_1:=2$;
\item
\qquad while ($X[n-m^2+1]\ge b$) \{
\item
\qquad\qquad let $m:=m^2$;
\item\label{first-end}
\qquad \};
\item\label{second-beg0}
\qquad let $i:=1$;
\item
\qquad let $m_1:=m$;
\item
\qquad let $r_1:=m$;
\item\label{second-beg}
\qquad while ($m_i\ge 1+\delta$) \{
\item
\qquad\qquad let $m_{i+1}:=\sqrt{m_i}$;
\item\label{if-state-app-region}
\qquad\qquad if ($X[n-\floor{m_{i+1}r_i}+1]\ge b)$, then let
$r_{i+1}:=m_{i+1}r_i$;
\item\label{if-else-for-ri+1}
\qquad\qquad else $r_{i+1}:=r_i$;
\item\label{end-second-loop}
\qquad\qquad let $i:=i+1$;
\item\label{second-end}
\qquad \};
\item\label{second-end0}
\qquad return $[n-\floor{m_ir_i}+1,n]$;
\end{enumerate}
{\bf End of Algorithm}

\begin{lemma}\label{sorted-region-lemma} Let $\delta$ be a
parameter in $(0,1)$. Then there is an $O((\log {1\over
\delta})+(\log\log n))$ time algorithm such that given an element
$b$, and a list $A$ of sorted $n$ elements, it finds an
$(1+\delta)$-approximate $b$-region.
\end{lemma}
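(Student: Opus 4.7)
The plan: let $k$ denote the exact size of the true $b$-region, so that $X[n-k+1],\ldots,X[n]$ are precisely the elements of $X$ that are at least $b$. Steps 1--3 of the algorithm handle the trivial subcases $k=0$, $k=1$, $k=n$ exactly, so for the remainder of the argument I will assume $2\le k\le n-1$. The heart of the proof is an invariant that sandwiches $k$ between two running quantities and then converts this sandwich at termination into the approximation guarantee.

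Phase~1 (lines 4--7) repeatedly squares $m$, starting from $2$. By a straightforward induction on iterations, the body is entered only when $m^2\le k$ (the test $X[n-m^2+1]\ge b$ holds iff index $n-m^2+1$ lies in the $b$-region), so upon exit $m\le k<m^2$. Because $m$ doubles its exponent each iteration and never exceeds $\sqrt n$, phase~1 terminates in $O(\log\log n)$ iterations.

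Phase~2 (lines 8--17) tightens the estimate under two invariants, proved by induction on $i$: (a)~$r_i<k+1$, and (b)~$\floor{m_ir_i}\ge k+1$. The base case $i=1$ follows from $m_1=r_1=m$ together with phase~1. For the step, $m_{i+1}=\sqrt{m_i}$, and the test succeeds iff $\floor{m_{i+1}r_i}\le k$. In the yes branch, setting $r_{i+1}=m_{i+1}r_i$ gives $r_{i+1}<k+1$, preserving (a), while $\floor{m_{i+1}r_{i+1}}=\floor{m_{i+1}^2 r_i}=\floor{m_ir_i}\ge k+1$ by (b), preserving (b). In the no branch, $r_{i+1}=r_i$ trivially preserves (a), and $\floor{m_{i+1}r_{i+1}}=\floor{m_{i+1}r_i}\ge k+1$ is exactly the failing test. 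At loop exit we have $m_i<1+\delta$, so the returned region $R=[n-\floor{m_ir_i}+1,n]$ satisfies $|R|=\floor{m_ir_i}\ge k+1>k$, which shows $R$ contains the entire $b$-region. The ratio $|R|/k$ is bounded by $m_ir_i/k<(1+\delta)(k+1)/k$, yielding the claimed $(1+\delta)$-approximation up to a vanishing $O(1/k)$ slack.

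For the running time, each phase~2 iteration replaces $m_i$ by $\sqrt{m_i}$, halving $\log m_i$. Starting from $\log m_1=\log m\le\log n$ and terminating once $\log m_i<\log(1+\delta)=\Theta(\delta)$ gives $O(\log\log n+\log(1/\delta))$ iterations, each doing $O(1)$ work, matching the claimed bound. The main point requiring care is the floor slack in invariant (a): the yes branch only guarantees $r_{i+1}<k+1$ rather than $r_{i+1}\le k$, which technically inflates the approximation ratio by a factor of $(k+1)/k$. I expect to close this gap either by running the inner loop with parameter $\delta/2$ in place of $\delta$ (costing only a constant factor in iterations) or by noting that for small $k$ the boundary handling in steps 1--2 already forces exact containment so that the asymptotic $1+\delta$ guarantee is unaffected.
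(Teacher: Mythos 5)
Your proof follows the same two-phase skeleton and the same invariants as the paper's own argument, just reparameterized in terms of the true region size $k$: the paper tracks $X[n-\floor{r_i}+1]\ge b$ and $X[n-\floor{m_ir_i}+1]<b$, which translate exactly into your $\floor{r_i}\le k$ (equivalently $r_i<k+1$) and $\floor{m_ir_i}\ge k+1$. The inductive case analysis on the line-\ref{if-state-app-region} test and the $O(\log(1/\delta)+\log\log n)$ timing are identical to the paper's, so this is the same proof expressed in cleaner bookkeeping rather than a genuinely different route.

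Where you differ is the endgame, and there your instincts are actually sharper than the paper's: you explicitly flag the floor slack, whereas the paper silently asserts the returned region has ``$m_ir_i$ entries'' of which ``$r_i$'' are at least $b$, when the true counts are $\floor{m_ir_i}$ and $\floor{r_i}$ (and $r_i$ ceases to be an integer once $m_i<2$). The slack is a real defect in the algorithm as written: with $k=4$ and $\delta=0.1$, phase 1 gives $m=4$, and phase 2 exits returning $[n-4,n]$, a region of size $5=k+1$ containing only $4$ elements $\ge b$, yet $4<5/1.1$. More generally your invariant (b) forces $|R|=\floor{m_ir_i}\ge k+1$ always, so the achievable ratio is at least $(k+1)/k$, which exceeds $1+\delta$ whenever $k<1/\delta$. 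Unfortunately neither of your proposed patches closes this: replacing $\delta$ by $\delta/c$ for a constant $c$ still only gives $\floor{m_ir_i}\le(1+\delta)k$ once $k=\Omega(1/\delta)$, and steps 1--2 treat only $k\le1$, not the full small-$k$ regime. A genuine repair needs a separate case for $k=O(1/\delta)$ --- e.g.\ after phase 1 the boundary lies in $[n-m^2+1,n-m+1]$ with $m\le k$, so when $m=O(1/\delta)$ one can binary-search that window exactly in $O(\log m)=O(\log(1/\delta))$ extra queries. With that added your reparameterized argument gives a clean proof; without it, both your write-up and the paper's leave a gap for small $b$-regions.
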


\begin{proof}
After the first phase (lines~\ref{first-beg} to~\ref{first-end}) of
the algorithm, we obtain number $m$ such that
\begin{eqnarray}
X[n-m+1]&\ge& b, \ \ \ \ \ \mbox{ and}\label{m-boundary-ineqn1}\\
X[n-m^2+1]&<&b.\label{m-boundary-ineqn2}
\end{eqnarray}

As we already assume $X[i]=-\infty$ for every $i\le 0$, there is no
boundary problem for assessing the input list.
 The variable $m$ is an integer in the first phase. Thus,
the boundary point for the region with numbers at least the
threshold $b$ is in $[n-m^2+1,n-m+1]$. The variable $m$ can be
expressed as $2^{2^k}$ for some integer $k\ge 0$ after executing $k$
cycles in the first phase. Thus, the first phase takes $O(\log\log
n)$ time because $m$ is increased to $m^2$ at each cycle of the
first while loop, and $2^{2^k}\ge n$ for $k\ge \log \log n$.

In the second phase (lines~\ref{second-beg0} to~\ref{second-end0})
of the algorithm,  we can prove that $X[n-\floor{r_i}+1]\ge b$ and
$X[n-\floor{m_{i}r_i}+1]<b$ at the end of every cycle (right after
executing the statement at line~\ref{end-second-loop}) of the second
loop (lines~\ref{second-beg} to~\ref{second-end}). Thus, the
boundary point for the region with elements at the threshold $b$ is
in $[n-\floor{m_{i}r_i}+1,n-\floor{r_i}+1]$. The variable $m_i$ is
not an integer after $m_i<2$ in the algorithm. It can be verified
via a simple induction. It is true before entering the second loop
(lines~\ref{second-beg} to~\ref{second-end}) by
inequalities~(\ref{m-boundary-ineqn1}) and
(\ref{m-boundary-ineqn2}). Assume that at the end of cycle $i$,
\begin{eqnarray}
X[n-\floor{r_i}+1]&\ge& b; \ \ \ \mbox{and}\label{hypothesis-1-ineqn}\\
X[n-\floor{m_{i}r_i}+1]&<&b. \label{hypothesis-2-ineqn}
\end{eqnarray}

 Let us consider cycle
$i+1$ at the second loop. Let $m_{i+1}=\sqrt{m_i}$.
\begin{enumerate}
\item
Case 1: $X[n-\floor{m_{i+1}r_i}+1]\ge b$. Let $r_{i+1}=m_{i+1}r_i$
according to line~\ref{if-state-app-region} in the algorithm. Then
$X[n-\floor{r_{i+1}}+1]=X[n-\floor{m_{i+1}r_i}+1]\ge b$. By
inequality (\ref{hypothesis-2-ineqn}) in the hypothesis,
$X[n-\floor{m_{i+1}r_{i+1}}+1]=X[n-\floor{\sqrt{m_i}\sqrt{m_i}r_i}+1]=X[n-\floor{m_ir_i}+1]<b$.
\item
Case 2: $X[n-\floor{m_{i+1}r_i}+1]< b$. Let $r_{i+1}=r_i$ according
to line~\ref{if-else-for-ri+1} the algorithm. We have
$X[n-\floor{r_{i+1}}+1]=X[n-\floor{r_i}+1]\ge b$ by inequality
(\ref{hypothesis-1-ineqn}) in the hypothesis. By inequality
(\ref{hypothesis-2-ineqn}) in the hypothesis,
$X[n-\floor{m_{i+1}r_{i+1}}+1]=X[n-\floor{m_{i+1}r_i}+1]<b$ by the
condition of this case.
\end{enumerate}
Therefore, $X[n-\floor{r_{i+1}}+1]\ge b$ and
$X[n-\floor{m_{i+1}r_{i+1}}+1]<b$ at the end of cycle $i+1$ of the
second while loop.

Every number in $X[n-r_i+1,n]$, which has $r_i$ entries, is at least
$b$, and $X[n-m_ir_i+1,n]$ has $m_ir_i$ entries and $m_i\le
1+\delta$ at the end of the algorithm. Thus, the interval
$[n-m_ir_i+1,n]$ returned by the algorithm is an
$(1+\delta)$-approximation for the $b$-region.

 It takes $O(\log\log n)$ steps for converting
$m$ to be at most $2$, and additional $\log {1\over \delta}$ steps
to make $m$ to be at most $1+\delta$. When $m_i< 1+\delta$, we stop
the loop, and output an $(1+\delta)$-approximation. This step takes
at most $O(\log {1\over \delta}+\log\log n)$ time since $m_i$ is
assigned to $\sqrt{m_i}$ at each cycle of the second loop. This
proves Lemma~\ref{sorted-region-lemma}.
\end{proof}

After the first loop of the algorithm Approximate-Region(.), the
number $m$ is always of the format $2^{2^k}$ for some integer $k$.
In the second loop of the algorithm Approximate-Region(.), the
number $m$ is always of the format $2^{2^k}$ when $m$ is at least
$2$. Computing its square root is to convert $2^{2^k}$ to
$2^{2^{k-1}}$, where $k$ is an integer. Since $(1+{1\over 2^i})\cdot
(1+{1\over 2^i})>(1+{1\over 2^{i-1}})$, we have that $(1+{1\over
2^i})$ is larger than the square root of $(1+{1\over 2^{i-1}})$.  We
may let variable $m_i$ go down by following the sequence
$\{(1+{1\over 2^i})\}_{i=1}^{\infty}$ after $m_i\le 2$. In order
words, let $g(.)$ be an approximate square root function such that
$g(1+{1\over 2^i})=1+{1\over 2^{i+1}}$ for computing the square root
after $m\le 2$ in the algorithm. It has the property $g(m)\cdot
g(m)\ge m$. The assignment $m_{i+1}=\sqrt{m_i}$ can be replaced by
$m_{i+1}=g(m_i)$ in the algorithm. It can simplify the algorithm by
removing the computation of square root while the computational
complexity is of the same order.

\subsection{Approximate Sum}

 We present an algorithm to compute the approximate sum of a list
of sorted nonnegative elements. It calls the module for the
approximate region, which is described in
Section~\ref{appx-region-sec}.

The algorithm for the approximate sum of a sorted list $X$ of
nonnegative $n$ numbers generates a series disjoint intervals
$R_1=[r_1, r_1'],\cdots, R_t=[r_t,r_t']$, and a series of thresholds
$b_1,\cdots, b_t$ such that each $R_i$ is an
$(1+\delta)$-approximate $b_i$-region in $X[1, r_i']$, $r_1'=n$,
$r_{i+1}'=r_i-1$, and $b_{i+1}\le {b_i\over 1+\delta}$, where
 $\delta={3\epsilon\over 4}$ and $1+\epsilon$ is
the accuracy for approximation. The sum of numbers in $X[R_i]$ is
approximated by $|R_i|b_i$. As the list $b_1>b_2>\cdots > b_t$
decreases exponentially, we can show that $t=O({1\over \epsilon}
\log n)$. The approximate sum for the input list is $\sum_{i=1}^t
|R_i|b_i$. We give a formal description of the algorithm and its
proof below.

\vskip 10pt

{\bf Algorithm Approximate-Sum($X, \epsilon,  n$)}

Input: $X[1,n]$ is a sorted list of nonnegative numbers (by
nondecreasing  order) and $n$ is the size of $X[1,n]$, and
$\epsilon$ is a parameter in $(0,1)$ for the accuracy of
approximation.

\begin{enumerate}[1.]
\item
\qquad  if $(X(n)=0)$, return $0$;

\item
\qquad let $\delta:={3\epsilon\over 4}$;

\item
\qquad let $r_1':=n$;

\item
\qquad let $s:=0$;

\item
\qquad let $i:=1$;

\item
\qquad let $b_1:={X[n]\over 1+\delta}$;

\item
\qquad while $(b_i\ge {\delta X[n]\over 3n})$ \{

\item
\qquad\qquad let $R_i:=$Approximate-Region($X, b_i, \delta, r_i'$);

\item
\qquad\qquad let $r_{i+1}':=r_{i}-1$ for $R_i=[r_i, r_i']$;

\item
\qquad\qquad let $b_{i+1}:={X[r_{i+1}']\over 1+\delta}$;

\item
 \qquad \qquad let $s_i:=|[r_i, r_i']|\cdot b_i$;

\item
 \qquad \qquad let $s:=s+s_i$;

\item
\qquad\qquad let $i:=i+1$;

\item
\qquad \};

\item
\qquad return $s$;
\end{enumerate}
{\bf End of Algorithm}

\begin{theorem}\label{main-theorem}
Let $\epsilon$ be a positive parameter. Then there is an $O({1\over
\epsilon}\min(\log n, {\log ({x_{max}\over x_{min}})})\cdot (\log
{1\over \epsilon}+\log\log n))$ time algorithm to compute
$(1+\epsilon)$-approximation for the sum of sorted list of
nonnegative numbers, where $x_{max}$ and $x_{min}$ are the largest
and the least positive elements of the input list, respectively.
\end{theorem}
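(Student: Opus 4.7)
The plan is to verify the correctness and running time of the already-specified algorithm Approximate-Sum, treating Approximate-Region as a black box via Lemma~\ref{sorted-region-lemma}. The setup is that Approximate-Sum produces a sequence of disjoint intervals $R_1,\ldots,R_t$ partitioning a suffix of $[1,n]$, together with thresholds $b_1>b_2>\cdots>b_t$, charges each region a sum $s_i = |R_i| b_i$, and returns $s = \sum_{i=1}^t s_i$. The error analysis splits naturally into a per-region piece and a tail piece for the uncovered prefix.

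For each region $R_i = [r_i, r_i']$ I would establish $s_i/(1+\delta) \le \sum_{j\in R_i} X[j] \le (1+\delta) s_i$ as follows. The upper bound uses sortedness together with the identity $X[r_i'] = (1+\delta) b_i$ (valid at $i=1$ because $b_1 = X[n]/(1+\delta)$, and at $i>1$ by the update $b_i = X[r_i']/(1+\delta)$). The lower bound uses the Approximate-Region guarantee from Lemma~\ref{sorted-region-lemma}: at least $|R_i|/(1+\delta)$ entries of $R_i$ are $\ge b_i$, and the remaining entries contribute nonnegatively. For the uncovered prefix $X[1,r_t-1]$, the loop's termination condition $b_{t+1}<\delta X[n]/(3n)$ combined with $X[r_t-1] = (1+\delta)b_{t+1}$ and sortedness bounds every uncovered entry by $(1+\delta)\delta X[n]/(3n)$, so the total uncovered mass is at most $(1+\delta)\delta X[n]/3 \le (1+\delta)\delta S/3$, where $S = \sum_j X[j] \ge X[n]$. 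Combining these two error sources with the choice $\delta = 3\epsilon/4$ and a short algebraic manipulation then gives $S/(1+\epsilon) \le s \le (1+\epsilon) S$.

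For the time bound, each iteration costs $O(\log(1/\delta)+\log\log n) = O(\log(1/\epsilon)+\log\log n)$ by Lemma~\ref{sorted-region-lemma}, so it suffices to bound the number of iterations $t$. Since $b_{i+1} = X[r_i-1]/(1+\delta)$ and $X[r_i-1] < b_i$ (because $R_i$ contains every $j$ with $X[j]\ge b_i$), each step contracts $b$ by at least a factor of $1+\delta$; combined with the termination threshold $\delta X[n]/(3n)$ this gives $t = O(\log n / \epsilon)$. Separately, while the loop runs $X[r_i']$ is a positive element of the list, hence $b_i \ge x_{min}/(1+\delta)$, while $b_1 = x_{max}/(1+\delta)$, so the same contraction argument yields $t = O(\log(x_{max}/x_{min})/\epsilon)$. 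Taking the minimum and multiplying by the per-iteration cost yields the claimed bound. The main obstacle will be the constant-tuning in the correctness step: the lower bound on $s$ loses a multiplicative $(1+\delta)$ factor from the region approximation and then absorbs the additive tail loss, so verifying $s \ge S/(1+\epsilon)$ reduces to checking an inequality essentially of the form $(1-(1+\delta)\delta/3)/(1+\delta) \ge 1/(1+\epsilon)$ at $\delta = 3\epsilon/4$ — routine but fiddly.
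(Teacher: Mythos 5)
Your proposal follows the paper's proof almost step by step: the per-region bound $s_i/(1+\delta)\le\sum_{j\in R_i}X[j]\le(1+\delta)s_i$ (upper bound from sortedness and $X[r_i']=(1+\delta)b_i$, lower bound from the Approximate-Region guarantee), the separate tail bound for the uncovered prefix, and the two iteration-count bounds via exponential contraction of $b_i$. That matches the paper exactly.

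One point deserves credit: your tail bound is actually more careful than the paper's. The paper asserts that every uncovered position $j$ has $X[j]<\delta X[n]/(3n)$, but what the loop condition really gives is $b_{t+1}<\delta X[n]/(3n)$, and since $X[r_t-1]=(1+\delta)b_{t+1}$, the correct bound is $X[j]<(1+\delta)\delta X[n]/(3n)$ for uncovered $j$. You correctly carry the extra $(1+\delta)$ factor, yielding uncovered mass at most $(1+\delta)\delta X[n]/3$.

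However, there is a genuine gap in the last step that you labeled ``routine but fiddly'' and did not check. The inequality you reduce to, $\bigl(1-(1+\delta)\delta/3\bigr)/(1+\delta)\ge 1/(1+\epsilon)$ with $\delta=3\epsilon/4$, is \emph{false} for every $\epsilon\in(0,1)$: rearranging, it requires $(1+\epsilon)\bigl(1-(1+\delta)\delta/3\bigr)\ge 1+\delta$, and expanding at $\delta=3\epsilon/4$ gives a deficit of order $\epsilon^2$ (for instance at $\epsilon=0.1$ the left side is $\approx 0.9052$ versus the right side $\approx 0.9091$). In fact the paper's own algebra has the same problem independently of your sharper tail constant: the step claiming $1/\bigl(1+\tfrac{4\delta/3}{1-\delta/3}\bigr)\ge 1/(1+4\delta/3)$ has the inequality backwards, since $1-\delta/3<1$ makes the denominator on the left strictly larger. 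So the constant $\delta=3\epsilon/4$ does not deliver a $(1+\epsilon)$-approximation under either accounting. The fix is harmless for the theorem statement --- take $\delta$ a bit smaller, e.g.\ $\delta=\epsilon/2$, for which $(1+\epsilon)\bigl(1-(1+\delta)\delta/3\bigr)-(1+\delta)=\epsilon/3-\epsilon^2/4-\epsilon^3/12\ge 0$ for $\epsilon\le 1$, and the running-time analysis only changes by constants. But since your proposal explicitly defers exactly this check and asserts it will go through at $\delta=3\epsilon/4$, you should flag that the verification fails there and that the parameter must be retuned.
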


\begin{proof} Assume that there are $t$ cycles executed in the while
loop of the algorithm Approximate-Sum(.). Let regions
$R_1,R_2,\cdots, R_t$ be generated.  In the first cycle of the loop,
the algorithm finds a region $R_1=[r_1, n]$ of the elements of size
at least ${X[n]\over 1+\delta}$. In the second cycle of the loop,
the algorithm finds region $R_2=[r_2, r_1-1]$ for the elements of
size at least ${X[r_1-1]\over 1+\delta}$. In the $i$-th cycle of the
loop, it finds a region $R_i=[r_i, r_{i-1}-1]$ of elements of size
at least ${X[r_{i-1}-1]\over 1+\delta}$. By the algorithm, we have
\begin{eqnarray}
j\in R_1\cup R_2\cup\cdots \cup R_t \ \ \mbox{for\ every\ $j$\ with\
} X[j]\ge {\delta X[n]\over 3n}.
\end{eqnarray}

Since each $R_i$ is an $(1+\delta)$-approximation of
${X[r_{i-1}-1]\over 1+\delta}$-region in $X[1,r_{i-1}-1]$,  $X[R_i]$
contains at least ${|R_i|\over 1+\delta}$ entries of size at least
${X[r_{i-1}-1]\over 1+\delta}$ in $X[1,r_{i-1}-1]$, $R_i$ also
contains every entry of size at least ${X[r_{i-1}-1]\over 1+\delta}$
in $X[1,r_{i-1}-1]$. Thus,
$${s_i\over
1+\delta}={|R_i|\over 1+\delta}\cdot {X[r_{i-1}-1]\over 1+\delta}\le
\sum_{j\in R_i} X[j]\le |R_i|X[r_{i-1}-1]=(1+\delta)s_i.$$
Thus,
$${s_i\over 1+\delta}\le \sum_{j\in R_i} X[j]\le (1+\delta)s_i.$$ We
have
\begin{eqnarray}
{1\over 1+\delta}\sum_{j\in R_i} X[j]\le s_i\le (1+\delta)\sum_{j\in
R_i} X[j].\label{si-app-ineqn}
\end{eqnarray}
 Thus, $s_i$ is an
$(1+\delta)$-approximation for $\sum_{j\in R_i} X[j]$. We also have
$\sum_{X[i]<{\delta X[n]\over 3n}}X[i]<{\delta X[n]\over 3}$ since
$X[1,n]$ has only $n$ numbers in total. Therefore, we have the
following inequalities:
\begin{eqnarray}
\sum_{X[i]\ge {\delta X[n]\over 3n}} X[i]&=&\sum_{i=1}^n X[i]-\sum_{X[i]< {\delta X[n]\over 3n}} X[i]\\
&\ge&\sum_{i=1}^n X[i]-{\delta\over 3}\sum_{i=1}^n X[i]\\
&=&(1-{\delta\over 3})\sum_{i=1}^n X[i]. \label{sigma-a-ineqn}
\end{eqnarray}

We have the inequalities:
\begin{eqnarray}
s&=&\sum_{i=1}^t s_i\\
&\ge& {1\over 1+\delta}\sum_{X[i]\ge {\delta X[n]\over 3n}}X[i]\ \ \ \ \ \mbox{(by\ inequality~(\ref{si-app-ineqn})))}\\
&\ge& {(1-{\delta\over 3})\over 1+\delta}\sum_{i=1}^n X[i]\ \ \ \ \ \mbox{(by\ inequality\ (\ref{sigma-a-ineqn}))}\\
 &=& {1\over {1+\delta\over 1-{\delta\over 3}}}\sum_{i=1}^n X[i]\\
 &=& {1\over 1+{{4\delta \over 3}\over 1-{\delta\over 3}}}\sum_{i=1}^n
 X[i]\\
 &\ge& {1\over 1+{4\delta \over 3}}\sum_{i=1}^n X[i]\\
&=& {1\over 1+\epsilon}\sum_{i=1}^n X[i].
\end{eqnarray}

As $R_1,R_2,\cdots$ are disjoint each other, we also have the
following inequalities:
\begin{eqnarray}
s&=&\sum_{i=1}^t s_i\\
&\le&\sum_{i=1}^t(1+\delta)\sum_{j\in R_i} X[j]\ \ \ \ \ \mbox{(by\ inequality~(\ref{si-app-ineqn}))}\\
&\le&  (1+\delta)\sum_{j=1}^n X[j]\\
&\le& (1+\epsilon)\sum_{j=1}^n X[j].
\end{eqnarray}

Therefore, the output $s$ returned by the algorithm is an
$(1+\epsilon)$-approximation for the sum $\sum_{i=1}^n X[i]$. By
Lemma~\ref{sorted-region-lemma}, each cycle in the while loop of the
algorithm takes $O((\log {1\over \delta}+\log\log n))$ time for
generating $R_i$.  For the descending chain $r_1'>r_2'>\cdots> r_t'$
with $X[r_i']\le {X[r_{i+1}']\over 1+\delta}$ and $b_i=X[r_i']\ge
{\delta X[n]\over 3n}$ for each $i$, we have that the number of
cycles $t$ is at most $O({1\over \delta}\log n)$. This is because
$X[r_t']\le {x_{max}\over (1+\delta)^t}\le {\delta X[n]\over 3n}$
for some $t=O({1\over \delta}\log n)$. Similarly, the number of
cycles $t$ is at most $O({1\over \delta}\log ({x_{max}\over
x_{min}}))$ because $X[r_t']\le {x_{max}\over (1+\delta)^t}\le
x_{min}$ for some $t=O({1\over \delta}\log ({x_{max}\over
x_{min}}))$.

Therefore, there are most $t=O({1\over \delta}\min(\log n,{{\log
{x_{max}\over x_{min}}}}))$ cycles in the while loop of the
algorithm. Therefore, the total time is $O({1\over \delta}\min(\log
n, {\log ({x_{max}\over x_{min}})})(\log {1\over \delta}+\log\log
n))=O({1\over \epsilon}\min(\log n, {\log ({x_{max}\over
x_{min}})})(\log {1\over \epsilon}+\log\log n))$. This proves
Theorem~\ref{main-theorem}.
\end{proof}

\section{Lower Bounds}\label{lower-bounds-sec}

In this section, we show several lower bounds about approximation
for the sum of sorted list. The $\Omega(\min(\log n,\log
({x_{max}\over x_{min}}))$ lower bound is based on the general
computation model for the sum problem. The lower bound $\Omega(\log
\log n))$ for finding an approximate $b$-region shows that upper
bound is optimal if using the method developed in
Section~\ref{alg-sec}. We also show that there is no sublinear time
algorithm if the input list contains one negative element.

\subsection{Lower Bound for Computing Approximate Sum}

In this section, we show a lower bound for the general computation
model, which almost matches the upper bound of our algorithm. This
indicates the algorithm in Section~\ref{alg-sec} can be improved by
at most $O(\log\log n)$ factor.

The lower bound is proved by a contradiction method.  In the proof
of the lower bound, two lists $L_1$ and $L_2$ are constructed. For
an algorithm with $o(\log n)$ queries, the two lists will have the
same answers to all queries. Thus, the approximation outputs for the
two inputs $L_1$ and $L_2$ are the same. We let the gap of the sums
from the two lists be large enough to make them impossible to share
the same constant factor approximation.

\begin{theorem}\label{app-sum-lower-bound-thm}
For every positive constant $d>1$, every $d$-approximation algorithm
for the sum of a sorted list of nonnegative numbers needs  at least
$\Omega(\min(\log n, \log {x_{max}\over x_{min}}))$ (adaptive)
queries to the list, where $\gamma$ is an arbitrary small constant
in $(0,1)$, where $x_{max}$ and $x_{min}$ are the largest and the
least positive elements of the input list, respectively..
\end{theorem}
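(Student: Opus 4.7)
The plan is to prove the lower bound by a standard adversary/indistinguishability argument, matching the ``two lists $L_1, L_2$'' strategy sketched in the paragraph preceding the theorem. Suppose for contradiction a deterministic algorithm $A$ outputs a $d$-approximation of the sum while making only $q = o(\min(\log n, \log(x_{max}/x_{min})))$ queries. I will construct two sorted lists $L_1, L_2 \in [x_{min}, x_{max}]^n$ on which $A$ receives identical query answers at every position, yet which satisfy $\mathrm{sum}(L_1)/\mathrm{sum}(L_2) > d^2$. Because $A$'s output is a function of its transcript, it must return the same number on both lists, and this single number cannot be a $d$-approximation to both $\mathrm{sum}(L_1)$ and $\mathrm{sum}(L_2)$.

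For the candidate hard family I would use the two-value family $L_t[i] = x_{min}$ for $i \le n-t$ and $L_t[i] = x_{max}$ for $i > n-t$, where $t$ ranges over a set $T$ of size $\Theta(\min(\log n, \log(x_{max}/x_{min})))$. Concretely, $T = \{t_0 < t_1 < \cdots < t_K\}$ forms a $d^2$-geometric progression between $\max(1, n x_{min}/x_{max})$ and $n$, so that (i) each $\mathrm{sum}(L_{t_i}) = (n-t_i)x_{min} + t_i x_{max} \asymp t_i x_{max}$ and (ii) consecutive members have sum ratio strictly exceeding $d^2$. The adversary simulates $A$ on an ``undetermined'' list, answering each query at position $p$ with whichever of $x_{min}, x_{max}$ keeps the larger contiguous block of surviving parameters in $T$ consistent with the partial transcript; because a query at $p$ on $L_t$ implements the monotone comparison ``is $t \ge n-p+1$?'', the surviving set is always a subinterval of $T$. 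Running this game for $q$ steps leaves a surviving interval of length at least two, from which $L_1$ and $L_2$ are chosen as two members with $d^2$-separated sums.

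The main obstacle is quantifying the shrinkage of the surviving set sharply. A naive ``halving per query'' bookkeeping on $|T|$ only yields $\Omega(\log K) = \Omega(\log\log\min(n, x_{max}/x_{min}))$, which is weaker than the stated bound. Closing this gap requires a more refined adversary that tracks the \emph{multiplicative span} of surviving parameters rather than their cardinality, and answers each query (by picking $p$-dependent geometric rather than arithmetic midpoints) so that the span shrinks by only a bounded factor per query; combined with the observation that the sum $\mathrm{sum}(L_t)$ is essentially linear in $t$ in the relevant range, this keeps the max/min sum-ratio over the surviving interval above $d^2$ for $q$ up to a constant fraction of $\min(\log n, \log(x_{max}/x_{min}))$. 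Verifying that the adversary's answers are globally consistent with at least two actual lists in $T$, and handling the two regimes (the $\log n$-dominated regime versus the $\log(x_{max}/x_{min})$-dominated regime) uniformly through the same family, is the principal technical step; apart from this bookkeeping, the contradiction with the $d$-approximation hypothesis is immediate.
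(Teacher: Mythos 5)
Your approach has a genuine, unfixable gap: the two-value family $L_t$ (with $x_{min}$ in the first $n-t$ positions and $x_{max}$ in the last $t$) simply cannot yield an $\Omega(\log n)$ lower bound, no matter how the adversary is refined. Observe that a query at position $p$ reveals exactly the sign of $t - (n-p+1)$, so the algorithm is just searching for a threshold index in a sorted $0/1$-like array. The paper's own Lemma~\ref{sorted-region-lemma} (the Approximate-Region routine, which repeatedly squares then square-roots the candidate window size) locates that threshold to within a constant multiplicative factor in $O(\log\log n)$ queries; consequently the two-value family admits a $d$-approximation in $O(\log\log n)$ queries. Your proposed fix---having the adversary track the multiplicative span $t_{\max}/t_{\min}$ of surviving parameters rather than $|T|$, and answer to keep the span large---makes this precise rather than evading it: any query can be placed at the geometric midpoint so the span shrinks from $S$ to at least $\sqrt{S}$ regardless of which side the adversary keeps, hence after $q$ queries the span is at most $S^{1/2^{q}}$ can be forced down to $d^2$ once $q \ge \log\log S - \log\log d^2$. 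That is $\Theta(\log\log n)$, not $\Theta(\log n)$; this is precisely the bound the paper proves for the \emph{approximate region} subproblem in Theorem~\ref{det-lower-bin-packing-thm}, using essentially your family. So the ``main obstacle'' you identify is real and not a bookkeeping issue: it is an algorithm beating your lower-bound family.

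The paper's actual hard instance is structurally different and this difference is the whole proof. Instead of two values, it uses a staircase of $m = \Theta(\min(\log n, \log(x_{max}/x_{min})))$ value levels $c, c^2, \dots, c^m$ with $c \approx 4d^2$, where level $c^i$ occurs with multiplicity $c^{m-i}$ so that every level contributes the \emph{same} amount $c^m$ to the sum. The adversary's move is: for every level-block $R_k$ the algorithm never touched, silently raise all its entries to the value of the next level $c^{k+1}$, multiplying that block's contribution by $c$. If the algorithm makes fewer than $\tfrac34 m$ queries, at least $\tfrac14 m$ blocks get raised, pushing the total sum up by more than a factor of $c/4 > d^2$ while leaving the transcript unchanged. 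The key point your family misses is that the sum is spread equally over $\Theta(\log n)$ ``independent'' scales, so the algorithm must inspect a constant fraction of them; with only two scales, as in your family, a double-exponential search collapses the problem to $O(\log\log n)$. If you want to salvage an adversary-style presentation, you must replace the two-value family with a multi-level one of this kind; the indistinguishability and contradiction steps you describe then go through essentially as you wrote them.
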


\begin{proof}
We first set up some parameters. Let
\begin{eqnarray}
c&=&(4+\delta)d^2, \label{c-d-eqn}\\
\alpha&=&{3\over 4\log c}, \ \ \ \ \ \mbox{and}\label{alpha-d-eqn}\\
\beta&=&{3\over 4},\label{beta-d-eqn}
\end{eqnarray}
 where $\delta$
is an arbitrary small constant in $(0,1)$. Let $m$ be a positive
integer.

Let $L_0$ be a list of $t$ numbers equal to $h$ with $h\le c$ and
$t\cdot h\le \delta mc^m$, where $h,t,$ and $\delta$ will be
determined later.

Let
list $R_{i}$ contain $c^{m-i}$ identical numbers equal to $c^{i}$
for $i=1,2,\cdots, m$. Let the first list
 $L_1'=R_1R_2\cdots R_{m}$, which is the concatenation of $R_1, R_2,\cdots, $ and $R_m$.
The list $L_1'$ has $n'=c^{m-1}+c^{m-2}+\cdots+c+1={c^{m}-1\over
c-1}$ numbers. We have $n'<c^m$ as $c>2$.
Assume that an algorithm $A(.)$ only makes at most $\beta m$
 queries to output a $d$-approximation for
the sum of sorted list of nonnegative numbers.


Let $A(L_i)$ represent the computation of the algorithm $A(.)$ with
the input list $L_i$. During the computation, $A(.)$ needs to query
the numbers in the input list.
Let $L_2'=R_1'R_2'\cdots R_{m}'$, where $R_i'$ has the same length
as $R_i$ and is derived from $R_i$ by the following two cases.

Let $L_i=L_0L_i'$ for $i=1,2$.

\begin{itemize}
\item
Case 1: $R_{k}$ in $L_1$ has no element queried by the algorithm
$A(L_1)$. Let $R_{k}'$ be a list of $|R_{k}|$ identical numbers
equal to that of $R_{k+1}$  (note that each element of $R_{k+1}$ is
equal to $c^{k+1}$). Since $R_{k}'$ has $c^{m-k}$ numbers equal to
$c^{k+1}$, the sum of numbers in $R_{k}'$ is $c^{m-k}\cdot
c^{k+1}=c^{m+1}$.

\item
Case 2: $R_k$ in $L_1$ has at least one element queried by the
algorithm $A(L_1)$.
 Let $R_k'=R_k$.
\end{itemize}

It is easy to verify that $L_2$ is still a nondecreasing list.
  The number of $R_i$s that
are not queried in $A(L_1)$ is at least $(m-\beta m)$, as the number
of queried elements is at most $\beta m$.

Let $S_1$ be the sum of elements in $L_1$, and $S_2$ be the sum of
elements in $L_2$. We have $S_1\le (\delta+1)mc^m$, and $S_2\ge
(m-\beta m)c^{m+1}$.
The two lists $L_1$ and $L_2$ have the same result for running the
algorithm. Assume that the algorithm gives an approximation $s$ for
both $L_1$ and $L_2$. We have
\begin{eqnarray}
s&\le& dS_1\le d(1+\delta)mc^m\ \ \  \mbox{for\ $L_1$}, and\label{L1-ineqn}\\
{1\over d}(m-\beta m)c^{m+1}&\le& {S_2\over d}\le s \ \ \ \mbox{
for\ $L_2$}\label{L2-ineqn}.
\end{eqnarray}
 By inequalities (\ref{L1-ineqn}) and (\ref{L2-ineqn}), we have ${1\over d}(m-\beta m)c^{m+1}\le d(1+\delta)mc^m$. Thus,
${1\over d}(1-\beta )c\le d(1+\delta)$. Thus, $1-{d^2(1+\delta)\over
c}\le \beta$. By equation~(\ref{c-d-eqn}), we have
$1-{d^2(1+\delta)\over c}>1-{1\over 4}={3\over 4}=\beta$. This
brings a contradiction. Thus, the algorithm cannot give a
$d$-approximation for the sum of sorted list with at most $\beta m$
 queries to the input list.

 The largest number of $L_1$ and $L_2$  is $c^m$. We can create the
 two cases for the lower bound.

 \begin{itemize}
\item
Case 1: $\log n>\log {x_{max}\over x_{min}}$.  We just let $L_0$
contains $t=n-n'$ $0$s. We have $\log {x_{max}\over x_{min}}=\log
{c^m\over c}=(m-1)\log c$. Since the algorithm has to make at least
$\beta m=\Omega(\log {x_{max}\over x_{min}})$ queries, we can see a
lower bound of $\Omega(\log {x_{max}\over x_{min}})$.

\item
Case 2: $\log n\le \log {x_{max}\over x_{min}}$. Let $L_0$ only
contain one number $h={\delta c\over n^2}$ (note $t=1$). Since the
algorithm has to make at least $\beta m=\Omega(\log n)$ queries, we
can see a lower bound of $\Omega(\log n)$.

 \end{itemize}
\end{proof}





\subsection{Lower Bound for Computing Approximate Region}

We give an $\Omega(\log\log n)$ lower bound for the deterministic
approximation scheme for a $b$-region in a sorted input list of
nonnegative numbers. The method is that if there is an algorithm
with $o(\log\log n)$ queries, two sorted lists $L_1$ and $L_2$ of
$0,1$ numbers are constructed. They reply the same answer the each
the query from the algorithm, but their sums have large difference.
This lower bound shows that it is impossible to use the method of
Section~\ref{alg-sec}, which iteratively finds approximate regions
via a top down approach, to get a better upper bound for the
approximate sum problem.

\begin{definition}\label{1-region-def}
For a sorted list $X[1,n]$ with $0,1$ numbers by nondecreasing
order, an {\it $d$-approximate $1$-region} is a region $R=[s,n]$,
which contains the last position $n$ of $X[1,n]$, such that at least
${|R|\over d}$ numbers in $X[s,n]$ are $1$, and $X[s,n]$ contains
all the positions $j$ with $X[j]=1$, where $|R|$ is the number of
integers $i$ in $R$.
\end{definition}

\begin{theorem}\label{det-lower-bin-packing-thm}
For any parameter $d>1$, every deterministic  algorithm must make at
least $\log\log n-\log\log (d+1)$ adaptive queries to a sorted input
list for the $d$-approximate $1$-region problem.
\end{theorem}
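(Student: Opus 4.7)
The plan is to use an adversary argument. For any deterministic algorithm $A$ making fewer than $\log\log n-\log\log(d+1)$ queries, I will construct two sorted $0,1$ lists $L_1$ and $L_2$ that produce identical query histories on $A$ but whose $1$-regions have sizes differing by more than a factor $d$. The algorithm must then return the same output $[s',n]$ on both inputs, and no single $s'$ can simultaneously contain all $1$s of the long-region input and $d$-approximate the short-region one.

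The construction is built from an uncertainty interval $[l_i,r_i]$ maintained throughout the simulation. I parametrize sorted $0,1$ inputs by the leftmost $1$-position $s\in\{1,\dots,n\}$ (so the true $1$-region has size $n-s+1$); the invariant is that every $s\in[l_i,r_i]$ is consistent with all answers given so far. Start with $[l_0,r_0]=[1,n]$. On a query $q$, if $q<l_i$ or $q\ge r_i$ the answer is forced (uniform over all surviving candidates) and the interval does not shrink; otherwise $l_i\le q<r_i$ and the adversary chooses, with answer $0$ contracting to $(l_{i+1},r_{i+1})=(q+1,r_i)$ and answer $1$ contracting to $(l_i,q)$. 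Track the size ratio $\rho_i:=(n-l_i+1)/(n-r_i+1)$ between the largest and smallest consistent region sizes, initially $\rho_0=n$.

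The adversary always picks the answer that keeps $\rho$ larger. The two candidate new ratios are $(n-q)/(n-r_i+1)$ and $(n-l_i+1)/(n-q+1)$, whose product equals $\rho_i\cdot(n-q)/(n-q+1)$; hence the adversary can maintain $\rho_{i+1}\ge\sqrt{\rho_i}$ up to lower-order integer-arithmetic terms. Iterating gives $\rho_k\ge n^{1/2^k}$ (modulo the accumulated slack), and the hypothesis $k<\log\log n-\log\log(d+1)$ then forces $\rho_k>d$. Taking $L_1$ corresponding to $s=l_k$ and $L_2$ corresponding to $s=r_k$, these two inputs are indistinguishable to $A$ but have region sizes $L=n-l_k+1$ and $R=n-r_k+1$ satisfying $L>dR$. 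Any valid output must have $s'\le l_k$ in order to contain $L_1$'s $1$s, giving $n-s'+1\ge L>dR$ and thereby violating the $d$-approximation bound on $L_2$, the contradiction we seek.

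The main obstacle is verifying the square-root recurrence cleanly in the integer-arithmetic setting and accounting for the constant loss per step so the final threshold lands at $\log\log(d+1)$ rather than $\log\log d$. At each step the algorithm's optimal balancing picks $q$ near $n-\sqrt{(n-l_i+1)(n-r_i+1)}$, which makes $(n-q)/(n-q+1)$ close to $1$ and $\rho_{i+1}$ essentially $\sqrt{\rho_i}$; only in the final near-degenerate iterations can this factor drop toward $1/2$, contributing an accumulated $O(1)$ slack in the log-ratio. I would absorb this slack via the strict inequality $d+1>d$ (equivalently, via $\log(2d)\le 2\log(d+1)$ for $d\ge 1$), which is precisely what allows the stated bound to read $\log\log(d+1)$ while still matching the above analysis.
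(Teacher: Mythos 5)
Your proposal follows the same adversary strategy as the paper: maintain an interval of consistent boundary positions, answer each query so the size ratio degrades by at most a square root per step, and at the end split the remaining uncertainty into two indistinguishable inputs whose $1$-regions differ by more than a factor $d$. The structural ideas match the paper's construction exactly.

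The one place where your argument has a real gap is the slack accounting. You correctly observe that your recurrence is $\rho_{i+1}\ge\sqrt{\rho_i\cdot (n-q)/(n-q+1)}$, so $\log\rho_{i+1}\ge\tfrac12\log\rho_i-\tfrac12\log\tfrac{n-q+1}{n-q}$, and the per-step loss is at most $\tfrac12$. Summing gives $\log\rho_k\ge(\log n)/2^k-1$, i.e.\ $\rho_k\ge n^{1/2^k}/2$, and the contradiction threshold becomes $k<\log\log n-\log\log(2d)$, not $k<\log\log n-\log\log(d+1)$. Your suggested repair---appealing to $\log(2d)\le 2\log(d+1)$---only shows $\log\log(2d)\le 1+\log\log(d+1)$, so it recovers the bound up to an additive $-1$, not exactly. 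Since $2d>d+1$ for all $d>1$, the factor-$2$ loss is genuinely larger than the $d\mapsto d+1$ slack the theorem budgets for, so as written your argument proves a slightly weaker constant than the statement claims (though still $\Omega(\log\log n)$).

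The paper avoids this accumulation with a small but important bookkeeping trick: the tracked interval $I_j=[a_j,n]$ is defined to \emph{include} the left endpoint $a_j$, which is already known to hold a $0$. Because of this one extra position, when the adversary answers a query at position $p$ it can set $I_{j+1}:=[p,n]$ (for a $0$-answer) or $I_{j+1}^R:=[p,n]$ (for a $1$-answer) with no ``$+1$'' shift, and the ratio recurrence $|I_{j+1}|/|I^R_{j+1}|\ge\sqrt{|I_j|/|I_j^R|}$ holds \emph{exactly}. The entire off-by-one cost is then paid once at the end: the number of $1$s in the second list is $|I_m|-1$ rather than $|I_m|$, which is precisely what turns $d$ into $d+1$ in $n^{1/2^m}\le d+1$. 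If you adopt that convention (track $[a_j,n]$ with $a_j$ a known $0$, rather than the tight candidate interval $[l_i,r_i]$), your argument closes cleanly and yields the stated bound.
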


\begin{proof}
We let each input list contain either $0$ or $1$ in each position.
Assume that $A(.)$ is a $d$-approximation algorithm for the
approximate region. Let $A(L_i)$ represent the computation of $A(.)$
with input list $L_i$. We  construct two lists $L_1$ and $L_2$ of
length $n$, and
 make sure that $A(L_1)$ and $A(L_2)$ receive the same answer
for each query to the input list. For the list of adaptive queries
generated by the algorithm $A(.)$, we generate a series of intervals

\begin{eqnarray}
[1,n]=I_0\supseteq I_1\supseteq\cdots \supseteq I_m.
\label{first-list1}
\end{eqnarray}

We also have a list
\begin{eqnarray}
[n,n]=I_0^R\subseteq I_1^R\subseteq\cdots \subseteq I_m^R,
\label{second-list1}
\end{eqnarray}
 where $m$ is the number of queries to the input list by the
 algorithm  $A(.)$ and
each $I_j^R$ is a subset of $I_j$ for $j=0,1,2,\cdots, m$.

 For each $I_j$, it is partitioned into
$I_j^L\cup I_j^R$ such that its right part $I_j^R$ is for $1$, and
its left part $I_j^L$ is undecided except its leftmost position.
Furthermore,
\begin{eqnarray}
|I_j|\ge n^{1\over 2^j} |I_j^R|, \label{stage-j-ineqn}
\end{eqnarray}
 and both $I_j$ and $R_j^R$ always
contain the position $n$, which is the final position in the input
list.

\vskip 10pt

{\bf Stage $0$}

\qquad let $I_0:=[1,n]$;

\qquad let $I_0^R:=[n,n]$;

\qquad let $L_1[1]:=L_2[1]:=0$;

\qquad let $L_1[n]:=L_2[n]:=1$;

\qquad mark every $1<i<n$ as a ``undecided" position ($1$ and $n$
are already decided);

{\bf End of Stage $0$;}

It is easy to see that inequality~(\ref{stage-j-ineqn}) holds for
Stage $j=0$.

\vskip 10pt

For an interval $[a,b]$,  $|[a,b]|$ is the number of integers in it
as defined in Definition~\ref{interval-size-def}. Assume that
$I_j=[a_j,n]$ and $I_j^R=[b_j,n]$. We assume that
inequality~(\ref{stage-j-ineqn}) holds for $j$. We also assume that
both  $L_1[i]$ and $L_2[i]$ have been decided to hold $0$ for each
$i\le a_j$;   both $L_1[i]$ and $L_2[i]$ have been decided to hold
$1$ for each $i\ge b_j$; and the other points are undecided after
stage $j$, which processes the $j$-query.

\vskip 10pt

{\bf Stage $j+1$  $(j\ge  0)$}

Assume that a position $p$ is queried  to the input list by the
$j+1$-th query ($j\ge 0$) made by the algorithm $A(.)$. We discuss
several cases.

\begin{itemize}
\item
 Case 1: $p\le a_j$. Let $I_{j+1}:=I_{j}$ and
$I_{j+1}^R:=I_{j}^R$. We have
$${|I_{j+1}|\over |I_{j+1}^R|}={|I_j|\over |I_j^R|}\ge {n^{1\over 2^{j}}}> n^{1\over 2^{j+1}}.$$
Let the answer to the $j+1$-th query be $0$ as we already assigned
$L_1[p]:=L_2[p]:=0$ in the earlier stages by the hypothesis.

\item
Case 2: $p> a_j$ and $p\in I_j^R$. Let $I_{j+1}:=I_{j}$ and
$I_{j+1}^R:=I_{j}^R$. We have
$${|I_{j+1}|\over |I_{j+1}^R|}={|I_j|\over |I_j^R|}\ge {n^{1\over 2^{j}}}> n^{1\over 2^{j+1}}.\ \ \ \mbox{(by\ the\ hypothesis)}$$
Let the answer to the $j+1$-th query be $1$ as we already assigned
$L_1[p]:=L_2[p]:=1$ in the earlier stages  by the hypothesis.

\item
Case 3: $p> a_j$ and $p\not\in I_j^R$ and ${|[p,n]|\over |I_j^R|}\ge
{\sqrt{|I_j|\over |I_j^R|}}$. Let $I_{j+1}:=[p,n]$ and
$I_{j+1}^R:=I_j^R$. We still have
$${|I_{j+1}|\over |I_{j+1}^R|}={|[p,n]|\over |I_j^R|}\ge
{\sqrt{|I_j|\over |I_j^R|}}\ge \sqrt{n^{1\over 2^{j}}}= n^{1\over
2^{j+1}}.$$

Let the answer to the $j+1$-th query be $0$, as the position $p$
will hold the number $0$. Let $L_1[i]:=L_2[i]:=0$ for each undecided
$i\le p$ (it becomes ``decided" after the assignment).


\item
Case 4: $p> a_j$ and $p\not\in I_j^R$ and ${|[p,n]|\over |I_j^R|}<
{\sqrt{|I_j|\over |I_j^R|}}$.   Let $I_{j+1}:=I_j$ and
$I_{j+1}^R:=[p,n]$.
We have the inequalities
\begin{eqnarray}
{|I_{j+1}|\over |I_{j+1}^R|}={|I_j|\over |[p,n]|}&=&{{|I_j|\over
|I_j^R|}\over {|[p,n]|\over |I_j^R|}}\\
&>&{{|I_j|\over
|I_j^R|}\over {\sqrt{|I_j|\over |I_j^R|}}}\ \mbox{(by \ the \ condition \ of \ this\ case)}\\
&=& \sqrt{|I_j|\over |I_j^R|}\ge \sqrt{n^{1\over 2^{j}}}= n^{1\over
2^{j+1}}. \ \ \ \mbox{(by\  the\ hypothesis)}
\end{eqnarray}

Let the answer to the $j+1$-th query be $1$, as the position $p$
will hold the number $1$. Let $L_1[i]:=L_2[i]:=1$ for each undecided
$i\ge p$ (it becomes ``decided" after the assignment).
\end{itemize}

{\bf End of Stage $j+1$}

\vskip 10pt

Assume that there are $m$ queries. The following final stage is
executed after processing all the $m$ queries.

\vskip 10pt

 {\bf Final Stage}

\qquad assume that $I_{m}=[a_{m},n]$ and $L_{m}^R=[b_{m}, n]$.

\qquad let $L_1[i]:=0$ for  every undecided $i<b_{m}$, and let
$L_1[i]=1$ for every undecided $i\ge b_{m}$;

\qquad let $L_2[i]:=0$ for every undecided $i\le a_{m}$, and let
$L_1[i]=1$ for every undecided $i> a_{m}$;

{\bf End of Final Stage}

\vskip 10pt

We note that the assignments to the two lists $L_1$ and $L_2$ are
consistent among all stages. In other words, if $L_i[j]$ is assigned
$a\in \{0,1\}$ at stage $k$, then $L_i[j]$ will not be assigned
$b\not=a$ at any stage $k'$ with $k<k'$, because of the two
chains~(\ref{first-list1}) and (\ref{second-list1}) in the
construction.

The two deterministic computations $A(L_1)$ and $A(L_2)$ have the
same result. We get two sorted lists $L_1$ and $L_2$ such that each
 position in $I_{m}^R$ of $L_1$ is $1$, every other position
of $L_1$ is $0$, each position in $[a_{m}+1,n]$ of $L_2$ is $1$, and
every other position of $L_2$ is $0$, where $I_{m}=[a_{m},n]$.

On the other hand, the numbers of $1$s of $L_1$ and $L_2$ are
greatly different. Let $D$ be the approximate $1$-region outputted
by the algorithm for the two lists. As the algorithm gives a
$d$-approximation for $L_1$, we have
\begin{eqnarray}
{|D|\over d}\le |I_{m}^R|.\label{for-L1-ineqn}
\end{eqnarray}
 As $D$ is a $d$-approximate $1$-region
for $L_2$, $D$ contains every $j$ with $X[j]=1$ (see
Definition~\ref{1-region-def}). We have
\begin{eqnarray}
|I_{m}|-1\le |D|.\label{for-L2-ineqn}
\end{eqnarray}
By inequalities (\ref{for-L1-ineqn}) and (\ref{for-L2-ineqn}),
$|I_{m}|-1\le d|I_{m}^R|$. Therefore, ${|I_{m}|-1\over |I_{m}^R|}\le
d$. Thus, ${|I_{m}|\over |I_{m}^R|}\le d+1$ as $|I_{m}^R|\ge 1$. We
have $n^{1\over 2^{m}}\le d+1$. This implies $m\ge \log\log
n-\log\log (d+1)$.
\end{proof}

\begin{corollary}\label{det-lower-bin-packing-cor} For any constant
$\epsilon\in (0,1)$, every deterministic $O(1)$-approximation
algorithm for $1$-region problem must make at least
$(1-\epsilon)\log\log n$ adaptive queries.
\end{corollary}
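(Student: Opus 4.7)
The plan is to derive the corollary as an immediate consequence of Theorem~\ref{det-lower-bin-packing-thm}. Since the algorithm is an $O(1)$-approximation, there exists a fixed constant $d>1$ such that it computes a $d$-approximate $1$-region; I would fix such a $d$ at the outset. By Theorem~\ref{det-lower-bin-packing-thm} applied with this constant $d$, the number of adaptive queries made on some worst-case input is at least $\log\log n-\log\log(d+1)$.

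The next step is to observe that $\log\log(d+1)$ is a fixed constant (depending only on the approximation ratio, not on $n$), while $\log\log n\to\infty$ as $n\to\infty$. Therefore for any fixed $\epsilon\in(0,1)$, one can pick $n$ large enough (namely $n$ such that $\log\log(d+1)\le\epsilon\log\log n$) so that
\[
\log\log n-\log\log(d+1)\ge(1-\epsilon)\log\log n,
\]
which gives the claimed bound. Thus the lower bound holds for all sufficiently large $n$, i.e., asymptotically.

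The only subtlety, which is minor, is being explicit about the quantifier on $n$: the statement is an asymptotic lower bound, so it is understood to hold for all large enough $n$ relative to the constants $d$ and $\epsilon$. There is no real obstacle, since Theorem~\ref{det-lower-bin-packing-thm} already does all the combinatorial work; the corollary is essentially a translation from the explicit bound $\log\log n-\log\log(d+1)$ into the more convenient form $(1-\epsilon)\log\log n$. No additional construction or adversary argument is needed.
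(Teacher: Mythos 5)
Your derivation is correct and is exactly the intended argument: the paper states the corollary without proof as an immediate consequence of Theorem~\ref{det-lower-bin-packing-thm}, obtained by fixing the constant $d$ hidden in the $O(1)$-approximation and absorbing the $\log\log(d+1)$ additive term into $\epsilon\log\log n$ for $n$ sufficiently large.
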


\subsection{Lower Bound for Sorted List with Negative Elements}

We derive a theorem that shows there is not any factor approximation
sublinear time algorithm for the sum of a list of elements that
contains both positive and negative elements.

\begin{theorem}
Let $\epsilon$ be an arbitrary positive constant. There is no
algorithm that makes at most $n-1$ queries to give
$(1+\epsilon)$-approximation for the sum of a list of $n$ sorted
elements that contains at least one negative element.
\end{theorem}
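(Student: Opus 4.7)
The plan is to prove the theorem by an adaptive adversary argument. Fix $\epsilon>0$ and suppose, toward a contradiction, that $A$ is a deterministic algorithm which, on every length-$n$ sorted input containing a negative element, makes at most $n-1$ queries and outputs a $(1+\epsilon)$-approximation of the sum. I will supply $A$ with query answers in a way that leaves two inputs $L_1, L_2$ consistent with the transcript whose sums have opposite signs; this precludes a single output from $(1+\epsilon)$-approximating both.

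The adversary commits to a ``template'' sequence $v_1<v_2<\cdots<v_n$ defined by $v_q=-M^{n+1-q}$ for $q\le n/2$ and $v_q=M^q$ for $q>n/2$, where $M$ is a large constant chosen in terms of $\epsilon$. This sequence is strictly increasing; the two geometric runs telescope to give $\sum_{q=1}^n v_q=0$; and $v_1=-M^n<0$. When $A$ queries position $q$, the adversary answers $v_q$. After at most $n-1$ queries, at least one position $p$ remains unqueried. Any sorted list $L$ consistent with the transcript satisfies $L[q]=v_q$ for $q$ in the queried set $Q$ and $L[p]\in [v_{p-1},v_{p+1}]$ (interpreting $v_0=-\infty$, $v_{n+1}=+\infty$). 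I then construct $L_1$ and $L_2$ by placing $L_1[p]$ at the left endpoint and $L_2[p]$ at the right endpoint of this interval (using any sufficiently extreme value when an endpoint is infinite). Both lists are sorted, and each contains a negative element: if $1\in Q$ then the list has $v_1<0$ at position $1$, and if $1=p$ then $L[1]\le v_2<0$.

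Because $\sum_q v_q=0$, the sum of $L_i$ equals $L_i[p]-v_p$. A short case analysis over the location of $p$---distinguishing whether $p$ falls in the negative half, the positive half, or at one of the boundary indices $p\in\{1,\lfloor n/2\rfloor,\lfloor n/2\rfloor+1,n\}$---uses the factor-$M$ geometric gap between neighboring $v_q$'s to show that $v_{p-1}-v_p$ and $v_{p+1}-v_p$ have opposite signs, each of magnitude at least $\Omega(M)$. For instance, if $p>n/2+1$ then $L_1[p]-v_p=M^{p-1}-M^p<0$ and $L_2[p]-v_p=M^{p+1}-M^p>0$; the analogous computations cover the negative half and the boundary positions. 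Consequently $S(L_1)<0<S(L_2)$ with magnitudes growing in $M$.

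To finish, observe that $A$ receives identical answers to identical queries on $L_1$ and $L_2$ and therefore outputs the same value $s$ on both; yet a $(1+\epsilon)$-approximation of a positive sum must be positive and of a negative sum must be negative under any sign-preserving extension of the paper's defining condition $S/(1+\epsilon)\le s\le S(1+\epsilon)$, so $s$ cannot be correct on both inputs. The only real difficulty in the plan is calibrating the template so that the sign-flip is uniform in $p$; the exponential scaling together with the zero-sum centering is engineered precisely to achieve this across all interior positions and all four boundary cases, and choosing $M$ large in terms of $\epsilon$ ensures that even under any constant-factor slack in the approximation definition the two intervals remain disjoint.
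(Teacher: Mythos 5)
Your proof is correct, and it is structurally the same idea as the paper's (run the deterministic algorithm on a fixed template, locate an unqueried position, and perturb that position to change the sum while keeping all query answers identical), but you have made it considerably more elaborate than necessary and introduced a definitional subtlety the paper carefully avoids. The paper's witness is the single concrete list $-m(m+1),\,2,\,4,\,\dots,\,2m$ (with $n=m+1$), whose sum is exactly $0$; the unqueried element is then bumped up by $1$, giving a second sorted list with sum $1$. Because the only $(1+\epsilon)$-approximation of $0$ under the paper's definition $\frac{S}{1+\epsilon}\le s\le(1+\epsilon)S$ is $s=0$, and any $(1+\epsilon)$-approximation of $1$ lies in $[\frac{1}{1+\epsilon},\,1+\epsilon]$, the contradiction is immediate, and, crucially, both target sums are nonnegative so the definition applies literally. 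Your construction instead manufactures $S(L_1)<0<S(L_2)$, which runs into the fact that for $S<0$ the interval $[\frac{S}{1+\epsilon},\,(1+\epsilon)S]$ prescribed by the paper's definition is empty; you patch this by invoking a ``sign-preserving extension'' of the definition, which is reasonable but informal and not needed if one keeps the sums in $\{0,1\}$ as the paper does. Your geometric scaling parameter $M$ is also superfluous: once $v$ is strictly increasing and $\sum_q v_q=0$, replacing $v_p$ by $v_{p-1}$ or $v_{p+1}$ already flips the sign of the sum regardless of magnitude, so the entire ``calibration'' discussion and the four boundary-index cases buy you nothing under a multiplicative approximation guarantee. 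In short: same skeleton, but the paper's instantiation is cleaner, avoids reinterpreting the definition for negative sums, and needs no scaling parameter.
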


\begin{proof}
Consider a list of element $-m(m+1),  2, \cdots, 2m$. This list
contains $n=m+1$ elements. If there is an algorithm that gives
$(1+\epsilon)$-approximation, then there is an element, say $2k$,
that is not queried by the algorithm.

We construct another list that is identical to the last list except
$2k$ being replaced by $2k+1$.

The sum of the first list is zero, but the sum of the second list is
$1$. The algorithm gives the same result as the element $2k$ in the
first list and the element $2k+1$ in the second list are not queried
(all the other queries are the of the same answers). This brings a
contradiction.

Similarly, in the case that $-m(m+1)$ is not queried, we can bring a
contradiction after replacing it with $-m(m+1)+1$.
\end{proof}

\section{Implementation and Experimental Results}\label{implementation-sec}

As computing the summation of a list of elements is widely used,
testing the algorithm with program is important.
 Our algorithm has not only
theoretical guarantee for its speed and accuracy, but also
simplicity for converting into software.  We have implemented the
algorithm described in Section~\ref{alg-sec}. It has the fast
performance to compute the approximate sum of a sorted list with
nonnegative real numbers. As the algorithm is simple, it is straight
to convert it into a C++ program, which shows satisfactory
performance for both the speed and accuracy of approximation.

In the experiments conducted, we set up a loop to compute the
summation of $n=10^7$ elements. The loop is repeated $k=100$ times.
The approximation algorithm is much faster than the brute force
method to compute the approximate sum.

In order to avoid the memory limitation problem, we use an
nondecreasing  function $x(.)$, instead of a list, from integers to
double type floating point numbers. There is a function ``double
approximate\_sum(double (*x)(int), double e, int n)". If we let
function $x(i)$ return the $i$-the element of an input list, it can
also handle the input of a list of numbers, and compute its
approximate sum. In order to avoid the time consuming computation
for the square root function, we set up a table of $30$ entries to
save the values for $2^{2^k}$ with integer $k\in [-20, 9]$. This
table is enough to handle $e$ as small as $10^{-6}$ without calling
library function $sqrt(.)$ to compute the square root, and $n$ as
large as $2^{2^9}$.

When the number $n$ of numbers of the input is fixed to be $10^7$,
the speed of the software depends on the accuracy $1+e$. We let
$x(i)=i$ during the experiments. For parameter $e=0.1,0.01, 0.001$
and $0.0001$, our algorithm for the approximate sum is much faster
than the brute force method, which computes the exact sum.

Our algorithm may be slower than the brute force method when $e$ is
very small (for example $e=0.00001$).
 This is very reasonable from the analysis
of the algorithm as the complexity is inversely propositional to
$e$, and the algorithm Approximate-Sum(.) generates a lot of regions
$R_i$ with only one position.

\section{Conclusions and Open Problems }

We studied the approximate sum in a sorted list with nonnegative
elements. For a fixed $\epsilon$, there is a $\log\log n$ factor gap
between the upper bound of our algorithm, and our lower bound. An
interesting problem of further research is to close this gap.
Another interesting problem is the computational complexity of
approximate sum in the randomized computational model, which is not
discussed in this paper.

\section{Acknowledgements}

We would like to thank Cynthia Fu for her proofreading and comments
for an earlier version of this paper.


\end{document}